\documentclass[11pt]{article}

\usepackage[utf8]{inputenc}
\usepackage[T1]{fontenc}
\usepackage{amsmath, amssymb, amsthm}
\usepackage{graphicx}
\usepackage{booktabs}
\usepackage{multirow}
\usepackage{array}
\usepackage[margin=1in]{geometry}
\usepackage[colorlinks=true, linkcolor=black, urlcolor=blue, citecolor=black]{hyperref}
\usepackage{authblk}
\usepackage{caption}
\usepackage{subcaption}
\usepackage{natbib}
\usepackage{xcolor}
\usepackage{enumitem}
\usepackage{setspace}

\newtheorem{definition}{Definition}
\newtheorem{theorem}{Theorem}

\title{IyawoBench v2.0: Extended Diagnostic Evaluation of Large Language Model Clinical Triage in Nigerian Primary Care}

\author[1,*]{Anthonio Oladimeji Gabriel}
\author[1]{Dimeji Olawuyi}
\affil[1]{Centre for Clinical Intelligence and Safety, Iyawo Health, Ibadan, Nigeria}
\affil[*]{Corresponding author: \texttt{oladimejianthonio@iyawo.org}}

\date{July 2026}

\begin{document}
\maketitle

\begin{abstract}
\noindent
Large language models are being deployed as clinical triage tools in low and middle income countries where trained physicians are scarce. Existing safety metrics, however, produce misleading confidence: models scoring 100\% on binary safety measures may nevertheless exhibit systematic failure modes that render them undeployable at scale. We present \textbf{IyawoBench v2.0}, an extended diagnostic evaluation of large language model clinical triage on 200 synthetic vignettes derived from 1{,}200 real patient encounters at 19 Nigerian primary health centres. We introduce a formal mathematical framework comprising fourteen definitions and two theorems that decompose triage safety into three distinct failure modes: \emph{Conservative Escalation Bias} (CEB), \emph{Systematic Downgrade Bias} (SDB), and \emph{Middle-Tier Instability} (MTI). We propose the \emph{Escalation Bias Index} and \emph{Expected Deployment Cost} as novel metrics that expose failure modes hidden by conventional accuracy and sensitivity scores. Evaluated on three frontier models (Claude Sonnet 4.6, Llama 3.3 70B, Llama 3.1 8B) plus five naive baselines, we show that: (1) all three models exhibit at least one formal failure mode; (2) traditional sensitivity metrics conceal a 77 percentage point under-triage gap in Llama 3.1 8B; (3) the optimal model varies across three deployment scenarios (Emergency-Focused, System-Sustainability, Balanced), demonstrating that single-ranking benchmarks are inadequate for low-resource clinical AI selection. IyawoBench v2.0 provides both a rigorous benchmark and a diagnostic framework transferable to any triage-style clinical AI evaluation. All code, data, and analysis pipelines are publicly available at \url{https://github.com/anthoniooladimeji11-coder/iyawobench}.
\end{abstract}

\vspace{0.5em}
\noindent\textbf{Keywords:} clinical AI safety, large language models, triage, benchmark, low and middle income countries, failure mode analysis, cost-sensitive evaluation

\section{Introduction}

The deployment of large language models as clinical decision support tools in low and middle income countries carries dual promise and risk. Iyawo Health, a clinical AI platform serving 50 primary health centres across Oyo State, Nigeria, processes 60{,}000 patient encounters daily at USD 0.003 per encounter, operating in four languages (English, Yoruba, Hausa, and Nigerian Pidgin) with a 100\% safety record to date. Deployments at this scale expose failure modes that laboratory benchmarks do not measure.

Existing clinical AI benchmarks including MedQA \citep{jin2021medqa}, PubMedQA \citep{jin2019pubmedqa}, and MedMCQA \citep{pal2022medmcqa} evaluate isolated question-answer accuracy. Recent work such as MamaBench \citep{adewuyi2026mamabench} extends this framework to counterfactual robustness in maternal and paediatric care. Our prior work IyawoBench v1.0 \citep{gabriel2026iyawobench} introduced deployment-derived vignette evaluation for undifferentiated febrile illness triage, reporting a single-number safety score based on whether models correctly identified emergency cases as requiring referral.

We argue that all of these evaluation frameworks suffer from a common blind spot: they conflate distinct failure modes into aggregate metrics that produce misleading confidence in model safety. A model that appears ``100\% safe'' under conventional sensitivity metrics may in fact exhibit systematic under-triage that would harm patients in real deployment. A model that correctly identifies emergencies may nevertheless over-escalate low-acuity cases at rates that would collapse referral capacity. A model with strong aggregate accuracy may fail specifically on the moderate-urgency cases that make up the largest share of primary care volume.

This paper makes three contributions:

\begin{enumerate}[leftmargin=*]
\item \textbf{A formal mathematical framework.} We introduce fourteen definitions and two theorems that formalise clinical triage evaluation as a diagnostic problem. The Escalation Bias Index (EBI) quantifies over-escalation. Strict Sensitivity distinguishes exact-match emergency detection from lenient safe-referral detection. The Middle-Tier Confusion Rate (MCR) captures moderate-urgency instability. Expected Deployment Cost (EDC) translates model errors into asymmetric clinical costs reflecting Nigerian PHC realities.

\item \textbf{A failure mode taxonomy.} We define three formal failure modes: Conservative Escalation Bias (CEB), Systematic Downgrade Bias (SDB), and Middle-Tier Instability (MTI). Applied to three frontier language models, our framework reveals that all three exhibit at least one failure mode, that traditional metrics conceal these failures, and that different models exhibit different failure modes.

\item \textbf{Deployment-scenario ranking.} We demonstrate that the optimal model for clinical AI deployment depends on which failure mode is most costly in a given health system context. Three deployment scenarios (Emergency-Focused, System-Sustainability, Balanced) yield three different optimal models, exposing the inadequacy of single-ranking benchmarks for LMIC clinical AI selection.
\end{enumerate}

The formal framework is applicable beyond febrile illness triage. Any clinical AI system operating on ordinal decision spaces with asymmetric error costs can be evaluated using our metrics. IyawoBench v2.0 is both a benchmark and a diagnostic instrument.

\section{Related Work}

\paragraph{Clinical AI benchmarks.} MedQA \citep{jin2021medqa}, PubMedQA \citep{jin2019pubmedqa}, and MedMCQA \citep{pal2022medmcqa} evaluate medical knowledge as isolated question answering, enabling landmark results from Med-PaLM \citep{singhal2023medpalm} and GPT-4 \citep{nori2023gpt4medical}. These benchmarks report single accuracy numbers that do not distinguish between clinically equivalent errors and clinically dangerous errors. Almanac \citep{zakka2024almanac} moved toward deployment-oriented evaluation but retains accuracy as its primary metric.

\paragraph{Counterfactual and robustness evaluation.} Contrast sets \citep{gardner2020contrast} and counterfactually augmented data \citep{kaushik2020counterfactual} exposed the fragility of high-accuracy models on general NLP tasks. MedEinst \citep{chen2026medeinst} and MamaBench \citep{adewuyi2026mamabench} extended this to medical LLMs, formalising Diagnostic Fixation and the Bias Trap Rate. Our work is complementary but focuses on a different failure mode: within-decision-space error patterns rather than cross-example rigidity.

\paragraph{Cost-sensitive learning and asymmetric evaluation.} The clinical machine learning literature has long recognised that error costs are asymmetric \citep{elkan2001cost, saerens2002adjusting}. However, this recognition has not translated into standard evaluation practice for clinical LLMs. Aggregate accuracy remains the dominant metric even in safety-critical applications. Our Expected Deployment Cost framework operationalises asymmetric cost evaluation for triage tasks.

\paragraph{Deployment-derived benchmarks.} Our prior work \citep{gabriel2026iyawobench} introduced the deployment-derived vignette methodology, generating synthetic clinical cases from statistical distributions of real primary care encounters. This approach preserves ecological validity while eliminating patient privacy risk. The present work extends this by adding a formal diagnostic framework on top of the existing benchmark data.

\section{The IyawoBench Dataset}

\subsection{Dataset construction}

IyawoBench v1.0 comprises 200 synthetic clinical vignettes derived from statistical distributions of 1{,}200 real patient encounters across 19 primary health centres in Oyo State, Nigeria, captured through the Iyawo clinical platform between January and March 2026. Vignettes span eight febrile illness categories with clinically distinct triage implications:

\begin{itemize}[leftmargin=*, itemsep=2pt]
\item \textbf{Malaria}: Uncomplicated, Severe, Cerebral (WHO Malaria Guidelines 2025)
\item \textbf{Bacterial infections}: Typhoid Fever, Pneumonia, Severe Pneumonia (Nigeria STG 2022)
\item \textbf{Life-threatening infections}: Bacterial Meningitis, Sepsis (Surviving Sepsis Campaign 2021, WHO IMCI 2014)
\end{itemize}

Each vignette contains structured clinical data: patient demographics (age, sex, weight), vital signs (temperature, heart rate, respiratory rate, blood pressure, SpO$_2$), symptoms, malaria RDT result, and pregnancy status. The complete vignette schema is documented in Appendix A. No individual patient data appears in any vignette; all cases are synthetic constructions from aggregate distributions.

\subsection{Ground-truth triage labels}

Each vignette is assigned an expected triage decision $y^* \in \{y_0, y_1, y_2\}$ where $y_0 = \text{TREAT\_HERE}$, $y_1 = \text{REFER\_TODAY}$, and $y_2 = \text{REFER\_NOW}$. Labels reflect the clinically indicated action per applicable guidelines: WHO IMCI 2014, WHO Malaria Guidelines 2025, Surviving Sepsis Campaign 2021, and Nigeria Standard Treatment Guidelines 2022.

The class distribution reflects the epidemiology of febrile presentations at Nigerian PHCs, with over-representation of high-acuity cases to stress-test safety metrics:

\begin{center}
\begin{tabular}{lrr}
\toprule
\textbf{Triage Level} & \textbf{Count} & \textbf{Proportion} \\
\midrule
REFER\_NOW ($y_2$)   & 100 & 50.0\% \\
REFER\_TODAY ($y_1$) & 60  & 30.0\% \\
TREAT\_HERE ($y_0$)  & 40  & 20.0\% \\
\midrule
\textbf{Total}       & \textbf{200} & \textbf{100.0\%} \\
\bottomrule
\end{tabular}
\end{center}

\subsection{Data availability}

The full IyawoBench v1.0 dataset is released under Creative Commons BY 4.0 at \url{https://github.com/anthoniooladimeji11-coder/iyawobench}. Both JSON and CSV formats are provided. The v2.0 evaluation scripts, response caches, and analysis pipelines are also open-sourced under the same repository.

\section{Formal Framework}

We formalise clinical triage as an ordinal classification problem with asymmetric error costs. Let $\mathcal{X}$ denote the space of structured clinical vignettes and $\mathcal{Y} = \{y_0, y_1, y_2\}$ the ordered triage decision space with $y_0 \prec y_1 \prec y_2$ representing increasing clinical urgency.

\begin{definition}[Triage function]
A clinical triage model is a function $f: \mathcal{X} \rightarrow \mathcal{Y} \cup \{\bot\}$ where $\bot$ denotes a parsing failure or malformed output. For each vignette $x_i$, we denote the model prediction as $\hat{y}_i = f(x_i)$ and the ground-truth expert triage as $y_i^*$.
\end{definition}

\begin{definition}[IyawoBench]
IyawoBench $\mathcal{B} = \{(x_i, y_i^*)\}_{i=1}^{N}$ with $N = 200$ vignettes is a fixed evaluation instrument. The class distribution is $n_c = |\{i : y_i^* = c\}|$ for $c \in \mathcal{Y}$, with $n_{y_2} = 100$, $n_{y_1} = 60$, $n_{y_0} = 40$.
\end{definition}

\subsection{Baseline metrics}

\begin{definition}[Overall accuracy]
\[
\text{Acc}(f) = \frac{1}{N} \sum_{i=1}^{N} \mathbb{1}[\hat{y}_i = y_i^*]
\]
\end{definition}

\begin{definition}[Lenient sensitivity on REFER\_NOW]
The safety score as defined in v1.0. Of true high-acuity cases, the proportion \emph{not} downgraded to TREAT\_HERE:
\[
\text{Sens}^{\text{lenient}}_{y_2}(f) = \frac{|\{i : y_i^* = y_2 \wedge \hat{y}_i \neq y_0\}|}{n_{y_2}}
\]
\end{definition}

\begin{definition}[Specificity on TREAT\_HERE]
Of true low-acuity cases, the proportion correctly kept at PHC level:
\[
\text{Spec}_{y_0}(f) = \frac{|\{i : y_i^* = y_0 \wedge \hat{y}_i = y_0\}|}{n_{y_0}}
\]
\end{definition}

All proportion estimates are reported with 95\% Wilson score confidence intervals \citep{wilson1927probable}.

\subsection{The Escalation Bias Index}

\begin{definition}[Escalation Bias Index]
\label{def:ebi}
\[
\text{EBI}(f) = \text{Sens}^{\text{lenient}}_{y_2}(f) - \text{Spec}_{y_0}(f) \in [-1, +1]
\]
A perfectly balanced model has $\text{EBI} = 0$. A model that saturates high acuity detection while collapsing on low acuity has $\text{EBI} \rightarrow 1$.
\end{definition}

\begin{definition}[Conservative Escalation Bias]
\label{def:ceb}
A triage function $f$ exhibits \emph{Conservative Escalation Bias} (CEB) if and only if:
\[
\text{Sens}^{\text{lenient}}_{y_2}(f) \geq \tau_s \quad \wedge \quad \text{Spec}_{y_0}(f) \leq \tau_c
\]
where $\tau_s = 0.95$ is the sensitivity threshold and $\tau_c = 0.30$ is the specificity ceiling. Equivalently, $\text{EBI}(f) \geq \tau_s - \tau_c = 0.65$.
\end{definition}

\begin{theorem}[Naive baseline validation]
\label{thm:naive}
The naive constant-escalation baseline $f_{\text{naive}}(x) = y_2$ for all $x \in \mathcal{X}$ achieves $\text{Sens}^{\text{lenient}}_{y_2}(f_{\text{naive}}) = 1.0$, $\text{Spec}_{y_0}(f_{\text{naive}}) = 0.0$, and $\text{EBI}(f_{\text{naive}}) = 1.0$. Consequently, $f_{\text{naive}}$ satisfies Definition \ref{def:ceb} and is formally classified as exhibiting CEB.
\end{theorem}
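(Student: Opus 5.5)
The argument is a direct substitution of the constant prediction $\hat{y}_i = y_2$ for every $i$ into the earlier definitions. I will evaluate the three quantities in turn and then check the two inequalities of Definition~\ref{def:ceb}.

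\emph{Lenient sensitivity.} For every index $i$ with $y_i^* = y_2$ we have $\hat{y}_i = y_2 \neq y_0$, since $y_0 \prec y_2$ are distinct elements of $\mathcal{Y}$. So the numerator set in the definition of $\text{Sens}^{\text{lenient}}_{y_2}$ is exactly $\{i : y_i^* = y_2\}$, whose cardinality is $n_{y_2} = 100$. Hence the ratio equals $1.0$.

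\emph{Specificity.} For every $i$ with $y_i^* = y_0$ we have $\hat{y}_i = y_2 \neq y_0$, so the numerator set is empty. Because $n_{y_0} = 40 > 0$, the ratio is well defined and equals $0.0$. Positivity of the denominators is the only point needing care, and it is guaranteed by the class counts fixed in the definition of IyawoBench $\mathcal{B}$.

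\emph{Conclusion.} By Definition~\ref{def:ebi}, $\text{EBI}(f_{\text{naive}}) = 1.0 - 0.0 = 1.0$, which is the maximal value in $[-1,+1]$. For CEB, we check $1.0 \geq \tau_s = 0.95$ and $0.0 \leq \tau_c = 0.30$, so both conjuncts of Definition~\ref{def:ceb} hold. As a consistency check, $\text{EBI} = 1.0 \geq 0.65 = \tau_s - \tau_c$.

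I would stress that the CEB classification follows from the two defining inequalities, not from the EBI bound. The stated ``equivalence'' with $\text{EBI} \geq 0.65$ only holds in the direction CEB $\Rightarrow$ EBI $\geq 0.65$. For example, sensitivity $0.9$ with specificity $0.2$ gives EBI $= 0.7$ but fails the sensitivity threshold. No step here is a genuine obstacle; the argument is purely definitional.
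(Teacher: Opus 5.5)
Your proof is correct and follows the same direct-substitution route as the paper's proof. The only difference is the last step: the paper closes by citing $\text{EBI} = 1 \geq 0.65$, leaning on the stated ``equivalence'' in Definition~\ref{def:ceb}. You instead check the two conjuncts $1.0 \geq 0.95$ and $0.0 \leq 0.30$ directly, which is the sound way to finish, since, as your counterexample shows, $\text{EBI} \geq 0.65$ alone does not imply CEB.
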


\begin{proof}
By construction, $\hat{y}_i = y_2$ for all $i$. Therefore $|\{i : y_i^* = y_2 \wedge \hat{y}_i \neq y_0\}| = n_{y_2}$, giving lenient sensitivity of 1. Similarly, $|\{i : y_i^* = y_0 \wedge \hat{y}_i = y_0\}| = 0$, giving specificity of 0. The EBI is therefore $1 - 0 = 1 \geq 0.65$. $\square$
\end{proof}

Theorem \ref{thm:naive} establishes that CEB is architecture-agnostic. A model behaving indistinguishably from constant escalation on low-acuity cases while retaining accurate escalation on high-acuity cases exhibits the maximum possible EBI.

\subsection{The strict sensitivity metric}

The lenient sensitivity of Definition 4 counts REFER\_TODAY as ``safe'' for true REFER\_NOW cases. This conceals systematic downgrade patterns.

\begin{definition}[Strict sensitivity on REFER\_NOW]
\label{def:strict}
\[
\text{Sens}^{\text{strict}}_{y_2}(f) = \frac{|\{i : y_i^* = y_2 \wedge \hat{y}_i = y_2\}|}{n_{y_2}}
\]
\end{definition}

\begin{definition}[Systematic Downgrade Bias]
\label{def:sdb}
A triage function $f$ exhibits \emph{Systematic Downgrade Bias} (SDB) if:
\[
\text{Sens}^{\text{strict}}_{y_2}(f) < 0.50 \quad \wedge \quad \text{Sens}^{\text{lenient}}_{y_2}(f) - \text{Sens}^{\text{strict}}_{y_2}(f) > 0.30
\]
A model exhibiting SDB appears safe under lenient metrics but silently under-triages emergencies by one level.
\end{definition}

\subsection{Middle-tier instability}

The moderate-urgency class (REFER\_TODAY) constitutes a substantial fraction of PHC volume and is prone to bidirectional error.

\begin{definition}[Middle-tier confusion rate]
\[
\text{MCR}(f) = 1 - \frac{|\{i : y_i^* = y_1 \wedge \hat{y}_i = y_1\}|}{n_{y_1}}
\]
\end{definition}

\begin{definition}[Direction asymmetry]
Let $u = |\{i : y_i^* = y_1 \wedge \hat{y}_i = y_2\}|$ (upshifts) and $d = |\{i : y_i^* = y_1 \wedge \hat{y}_i = y_0\}|$ (downshifts). Then:
\[
\text{DA}(f) = \frac{u - d}{n_{y_1}} \in [-1, +1]
\]
$\text{DA} \rightarrow +1$ indicates over-escalation dominance; $\text{DA} \rightarrow -1$ indicates under-triage dominance; $|\text{DA}| < 0.3$ indicates bidirectional instability.
\end{definition}

\begin{definition}[Middle-Tier Instability]
A triage function $f$ exhibits \emph{Middle-Tier Instability} (MTI) if $\text{MCR}(f) > 0.60$ and $|\text{DA}(f)| < 0.30$.
\end{definition}

\subsection{Expected deployment cost}

Clinical error costs are asymmetric. Missing a life-threatening emergency imposes vastly higher cost than over-referring a stable patient.

\begin{definition}[Clinical cost matrix]
Let $c(y^*, \hat{y})$ be the clinical cost of predicting $\hat{y}$ when truth is $y^*$. Under the Balanced Deployment scenario reflecting Nigerian PHC realities:
\[
C = \begin{pmatrix}
0 & 1 & 3 \\
5 & 0 & 2 \\
20 & 8 & 0
\end{pmatrix}
\]
Rows indexed by truth $y^* \in \{y_0, y_1, y_2\}$, columns by prediction. The values encode: (i) missing a REFER\_NOW to TREAT\_HERE ($c = 20$) is 20$\times$ costlier than escalating a TREAT\_HERE to REFER\_TODAY ($c = 1$); (ii) over-escalating to REFER\_NOW ($c = 3$) is 3$\times$ costlier than to REFER\_TODAY because it overwhelms tertiary facilities; (iii) missing REFER\_TODAY entirely ($c = 5$) carries substantial cost due to delayed care.
\end{definition}

\begin{definition}[Expected deployment cost]
\label{def:edc}
\[
\text{EDC}(f) = \frac{1}{N} \sum_{i=1}^{N} c(y_i^*, \hat{y}_i)
\]
A model with $\text{EDC} < 1.5$ is deployment-viable; $1.5 \leq \text{EDC} < 3.0$ requires prompt engineering; $\text{EDC} \geq 3.0$ is unsustainable.
\end{definition}

\begin{theorem}[EDC lower bound from CEB]
\label{thm:edc}
For any model exhibiting CEB per Definition \ref{def:ceb}, the TREAT\_HERE over-escalation contribution to EDC satisfies:
\[
\text{EDC}_{\text{CEB}}(f) \geq \frac{n_{y_0}}{N} \left(1 - \text{Spec}_{y_0}(f)\right) \bar{c}_{\text{over}}
\]
where $\bar{c}_{\text{over}} = (c(y_0, y_1) + c(y_0, y_2))/2 = 2.0$ under the balanced deployment cost matrix.
\end{theorem}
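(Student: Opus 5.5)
The plan is to isolate the part of $\text{EDC}(f)$ that comes from the $n_{y_0}$ true TREAT\_HERE vignettes and to bound it below by counting misclassifications. Write
\[
\text{EDC}_{\text{CEB}}(f) = \frac{1}{N}\sum_{i:\,y_i^*=y_0} c(y_i^*,\hat y_i).
\]
Every term of $\text{EDC}$ is nonnegative, so this partial sum is itself a lower bound on $\text{EDC}(f)$. Among the true $y_0$ cases, let $a$ be the number predicted $y_1$, $b$ the number predicted $y_2$, and $p$ the number of parse failures $\bot$. I will adopt the convention that $\bot$ is charged at least $c(y_0,y_1)$; since Definition~\ref{def:edc} leaves the cost of $\bot$ unspecified, this convention has to be stated. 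By Definition~5,
\[
a+b+p = n_{y_0}\bigl(1-\text{Spec}_{y_0}(f)\bigr),
\]
and therefore
\[
\text{EDC}_{\text{CEB}}(f) \ge \frac{1}{N}\bigl(a\,c(y_0,y_1) + b\,c(y_0,y_2) + p\,c(y_0,y_1)\bigr).
\]

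The main obstacle comes next: replacing the per-case cost by the average $\bar c_{\text{over}}=2$. This step is not valid in general. If every misclassified $y_0$ case goes to $y_1$, the contribution equals $\frac{n_{y_0}}{N}(1-\text{Spec}_{y_0})\cdot 1$, which is only half the stated bound. So I would first prove the unconditional version with $\min\{c(y_0,y_1),c(y_0,y_2)\}=1$ in place of $\bar c_{\text{over}}$, which gives
\[
\text{EDC}_{\text{CEB}}(f) \ge \frac{n_{y_0}}{N}\bigl(1-\text{Spec}_{y_0}(f)\bigr).
\]
I would then recover the stated constant $2$ under an explicit extra hypothesis. Since
\[
a\cdot 1 + b\cdot 3 - 2(a+b) = b - a,
\]
the bound with $\bar c_{\text{over}}$ holds exactly when over-escalations to REFER\_NOW are at least as frequent as those to REFER\_TODAY ($b\ge a$, with $p=0$ or $\bot$ charged at cost at least $2$). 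It holds with equality in the balanced case $a=b$. Equivalently, the statement with $\bar c_{\text{over}}$ can be read as the value expected when misclassified $y_0$ cases split uniformly between the two escalation levels. This is the interpretation I would make explicit. The naive baseline of Theorem~\ref{thm:naive} has $b=n_{y_0}$ and $a=0$, so it satisfies the hypothesis and in fact attains $3\cdot\frac{40}{200}=0.6$.

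Finally I would apply the CEB hypothesis. Definition~\ref{def:ceb} gives $\text{Spec}_{y_0}(f)\le\tau_c=0.30$, so $1-\text{Spec}_{y_0}(f)\ge 0.7$. With $n_{y_0}/N=0.2$, every CEB model then incurs at least $0.14$ of $\text{EDC}$ from TREAT\_HERE over-escalation unconditionally, and at least $0.28$ under the $b\ge a$ (or uniform-split) reading with $\bar c_{\text{over}}=2$. The sensitivity condition in CEB plays no role in this bound; only the specificity ceiling is used.
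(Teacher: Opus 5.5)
Your proposal is correct, and it is more careful than the paper's own proof. The paper starts the same way you do. It restricts $\text{EDC}$ to the vignettes with $y_i^*=y_0$ and notes that $n_{y_0}(1-\text{Spec}_{y_0}(f))$ of them are misclassified. It then asserts the inequality ``by taking the average cost of the two possible over-escalation destinations.'' That is exactly the step you reject, and you are right to. An average of the per-case costs is not a lower bound on them; the minimum $c(y_0,y_1)=1$ is.

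The statement as written is in fact false, and one of the paper's own baselines shows it. always\_refer\_today has $\text{Sens}^{\text{lenient}}_{y_2}=1$ and $\text{Spec}_{y_0}=0$, so it exhibits CEB. Yet its TREAT\_HERE contribution is $40\cdot 1/200=0.2$, below the claimed $\frac{40}{200}\cdot 1\cdot 2=0.4$. So your unconditional bound with $\min\{c(y_0,y_1),c(y_0,y_2)\}$, and your conditional version under $b\ge a$, repair a false result rather than weaken a true one.

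Two further points work in your favour. First, the paper never assigns a cost to $\bot$; the cost matrix is defined only on $\mathcal{Y}\times\mathcal{Y}$. Without your convention, a model emitting $\bot$ on every TREAT\_HERE case would have $\text{Spec}_{y_0}=0$ but zero cost from that row, which breaks even the weak bound. Second, your observation that the sensitivity clause of CEB is never used is accurate: CEB only supplies $1-\text{Spec}_{y_0}(f)\ge 0.7$.

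One small correction. With $p>0$, the exact condition for the constant-$2$ bound is $b-a+p\,\bigl(c(y_0,\bot)-2\bigr)\ge 0$. So $b\ge a$ together with $c(y_0,\bot)\ge 2$ is sufficient but not necessary, and your ``exactly when'' is exact only for $p=0$.
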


\begin{proof}
The contribution to EDC from over-escalation of TREAT\_HERE cases is:
\[
\frac{1}{N} \sum_{i : y_i^* = y_0} \mathbb{1}[\hat{y}_i \neq y_0] \cdot c(y_0, \hat{y}_i) \geq \frac{n_{y_0}}{N}(1 - \text{Spec}_{y_0}(f)) \bar{c}_{\text{over}}
\]
by taking the average cost of the two possible over-escalation destinations. $\square$
\end{proof}

\subsection{Safety-adjusted accuracy}

\begin{definition}[Safety-adjusted accuracy]
For clinical severity weights $w_{y_0} = 1$, $w_{y_1} = 2$, $w_{y_2} = 4$:
\[
\text{SAA}(f) = \frac{\sum_{c \in \mathcal{Y}} w_c \cdot |\{i : y_i^* = c \wedge \hat{y}_i = c\}|}{\sum_{c \in \mathcal{Y}} w_c \cdot n_c}
\]
SAA rewards correct high-acuity classification proportionally to clinical severity.
\end{definition}

\section{Experimental Setup}

\subsection{Models evaluated}

We evaluate three frontier language models spanning proprietary and open-weight architectures:
\begin{itemize}[leftmargin=*, itemsep=2pt]
\item \textbf{Claude Sonnet 4.6} (Anthropic, API): frontier proprietary model
\item \textbf{Llama 3.3 70B} (Meta, via Groq): frontier open-weight model, large scale
\item \textbf{Llama 3.1 8B} (Meta, via Groq): efficient open-weight model, small scale
\end{itemize}

All models are evaluated with temperature $T = 0$ and maximum output tokens of 200. Five additional naive baselines are computed for reference: always\_refer\_now, always\_refer\_today, always\_treat\_here, random\_uniform, and class\_proportional (with $p$ matching the class distribution).

\subsection{Prompt template and parsing}

All models receive an identical prompt template presenting the vignette structure and requesting a triage decision from $\{$REFER\_NOW, REFER\_TODAY, TREAT\_HERE$\}$ with a one-sentence rationale. The parser accepts standard tokens, natural language equivalents, and reasoning-model output formats (see Appendix B for the full prompt and parser specification).

\subsection{Deployment scenarios}

Three deployment scenarios are defined via distinct cost matrices representing different health system priorities:

\paragraph{Emergency-Focused.} Missing emergencies is catastrophic ($c = 50$ for $y_2 \rightarrow y_0$); over-escalation is cheap. Reflects contexts where referral capacity is unconstrained.

\paragraph{System-Sustainability.} Over-escalation is expensive ($c = 8$ for $y_0 \rightarrow y_2$); emergency misses are moderate. Reflects contexts where tertiary facilities are near capacity.

\paragraph{Balanced-Deployment.} The default cost matrix from Definition 11. Reflects a moderate-resource Nigerian PHC context.

Full cost matrices are provided in Appendix C.

\subsection{Statistical inference}

McNemar tests for paired proportions \citep{mcnemar1947note} compare model accuracies. Wilson score confidence intervals \citep{wilson1927probable} report uncertainty on all proportion estimates. Effect sizes for cost differences use bootstrap resampling with 1{,}000 replicates.

\section{Results}

\subsection{Confusion matrices and cost weights}

Figure \ref{fig:confusion} presents confusion matrices for all three models, with each cell annotated by both the count and its clinical cost weight per Definition 11. The visual patterns immediately reveal three distinct model behaviours: Claude Sonnet 4.6 concentrates errors in the middle row (over-escalation of REFER\_TODAY); Llama 3.3 70B spreads errors bidirectionally on the middle row; Llama 3.1 8B concentrates errors in the bottom row (systematic downgrade of REFER\_NOW to REFER\_TODAY).

\begin{figure}[t]
\centering
\includegraphics[width=\textwidth]{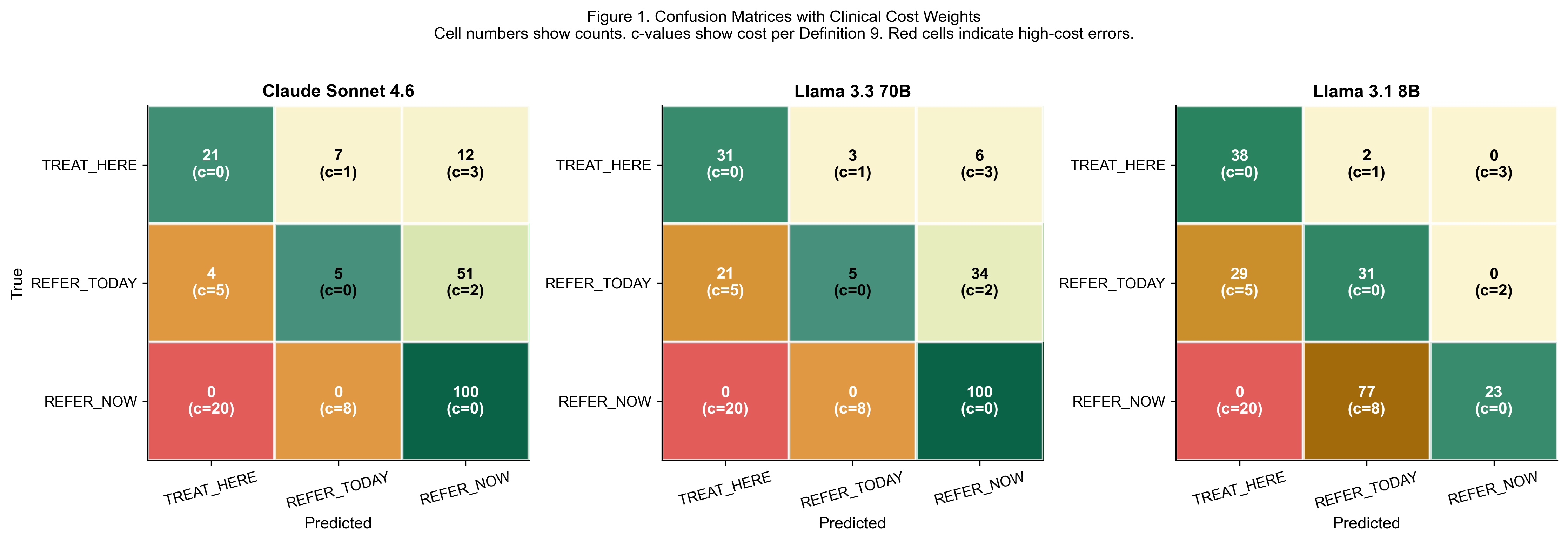}
\caption{Confusion matrices for the three evaluated models with clinical cost weights per Definition 11. Diagonal cells indicate correct classification; off-diagonal cells are shaded by cost intensity (light grey for low-cost errors, dark grey for high-cost errors).}
\label{fig:confusion}
\end{figure}

\subsection{The Escalation Bias Index}

Figure \ref{fig:ebi} plots the EBI for the three models against the naive baseline. All three models fall below the formal CEB threshold ($\text{EBI} \geq 0.65$), but Claude Sonnet 4.6 approaches it at $\text{EBI} = 0.475$. Theorem \ref{thm:naive} is empirically confirmed: the always\_refer\_now baseline achieves exactly $\text{EBI} = 1.000$.

\begin{figure}[t]
\centering
\includegraphics[width=0.85\textwidth]{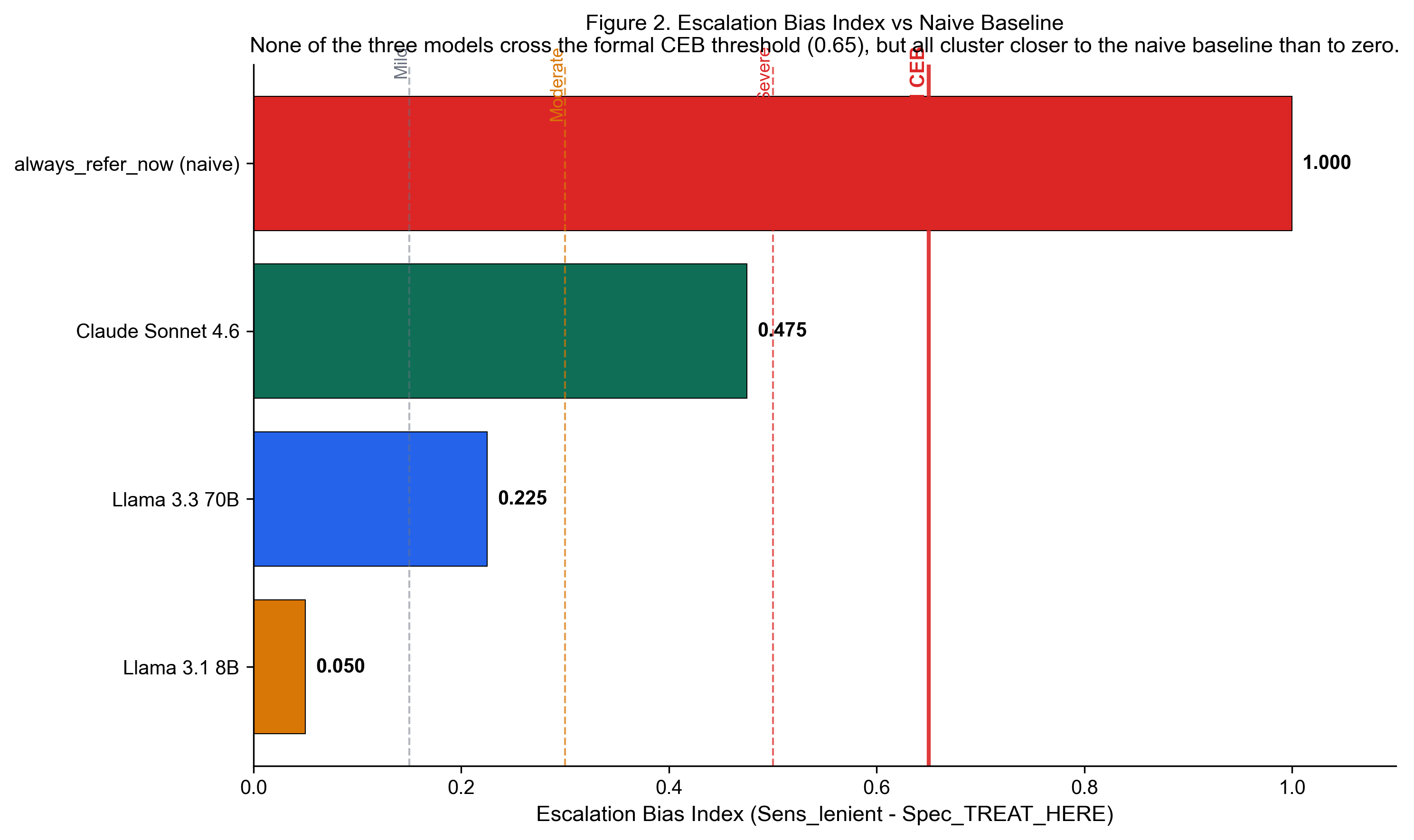}
\caption{Escalation Bias Index for the three evaluated models and the always\_refer\_now naive baseline. Dashed reference lines mark the Mild (0.15), Moderate (0.30), Severe (0.50), and Formal CEB (0.65) thresholds per Definition \ref{def:ceb}.}
\label{fig:ebi}
\end{figure}

\subsection{Strict vs lenient sensitivity: hidden under-triage}

Figure \ref{fig:strict} presents the most consequential finding. Under the lenient v1.0 safety metric, all three models score 100.0\%: none downgrades a REFER\_NOW case to TREAT\_HERE. Under the strict Definition \ref{def:strict}, however, Llama 3.1 8B collapses to 23.0\% while Claude Sonnet 4.6 and Llama 3.3 70B remain at 100.0\%. The 77 percentage point gap for Llama 3.1 8B reveals that 77 of 100 true emergency cases are downgraded to REFER\_TODAY. In real deployment, these patients would be told to come back within 24 hours rather than proceed to immediate hospital care. Llama 3.1 8B formally exhibits Systematic Downgrade Bias per Definition \ref{def:sdb}.

\begin{figure}[t]
\centering
\includegraphics[width=0.85\textwidth]{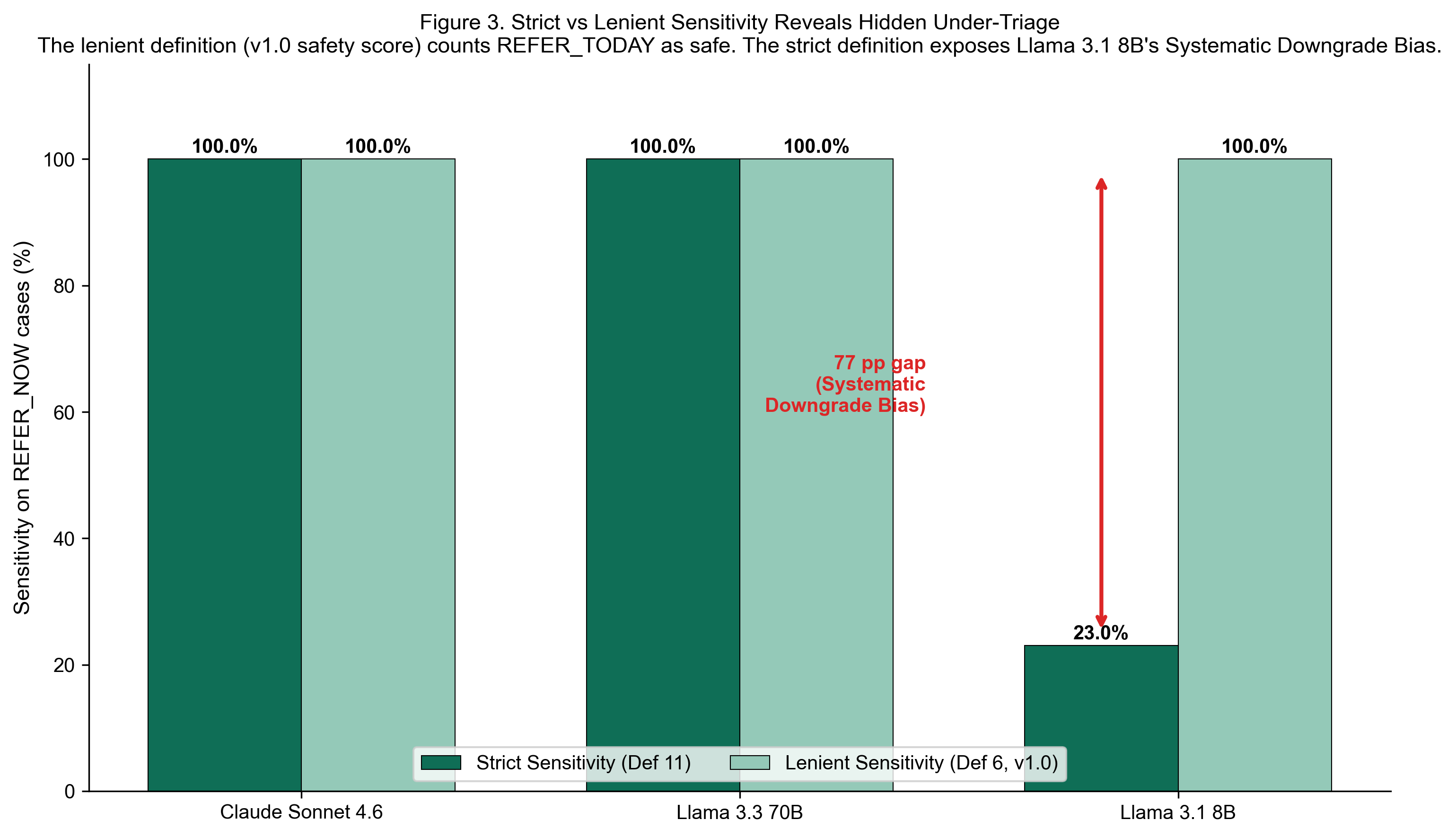}
\caption{Strict versus Lenient Sensitivity on REFER\_NOW cases. The lenient v1.0 metric registers all three models at 100\% ``safety''. The strict metric exposes a 77 percentage point gap for Llama 3.1 8B, indicating Systematic Downgrade Bias.}
\label{fig:strict}
\end{figure}

\subsection{Middle-tier instability}

Figure \ref{fig:middle} decomposes each model's classification of the 60 REFER\_TODAY vignettes. Claude Sonnet 4.6 escalates 85\% of these to REFER\_NOW (direction asymmetry $= +0.783$). Llama 3.3 70B splits errors symmetrically between over-escalation (34 to REFER\_NOW) and under-triage (21 to TREAT\_HERE), with direction asymmetry of $+0.217$ and $\text{MCR} = 0.917$, satisfying Definition 10 for Middle-Tier Instability. Llama 3.1 8B under-triages symmetrically (direction asymmetry $= -0.483$).

\begin{figure}[t]
\centering
\includegraphics[width=\textwidth]{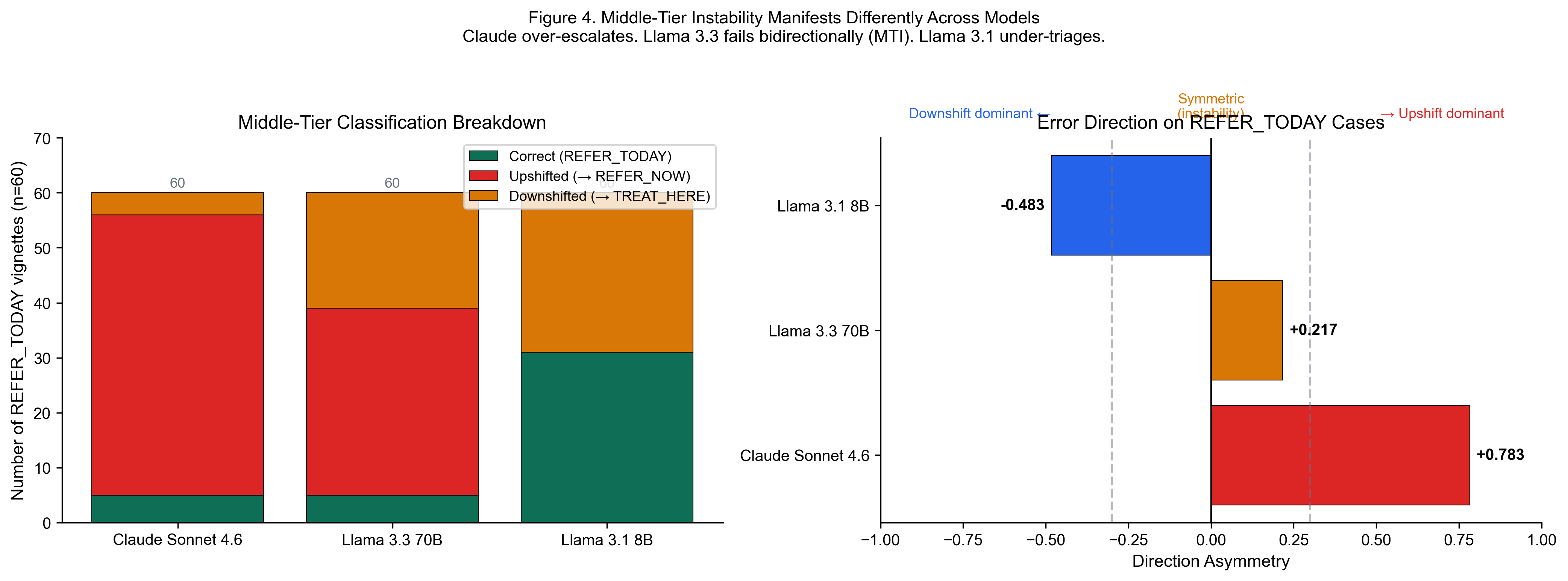}
\caption{Middle-Tier classification breakdown (left) and direction asymmetry (right). Claude Sonnet 4.6 exhibits strong over-escalation. Llama 3.3 70B exhibits Middle-Tier Instability with near-symmetric bidirectional errors. Llama 3.1 8B under-triages.}
\label{fig:middle}
\end{figure}

\subsection{Deployment-scenario ranking}

Figure \ref{fig:scenarios} presents Expected Deployment Cost for each model across the three deployment scenarios. The ranking varies non-trivially with scenario:

\begin{itemize}[leftmargin=*, itemsep=2pt]
\item \textbf{Emergency-Focused}: always\_refer\_now (naive) wins with EDC $= 0.70$, followed by Claude Sonnet 4.6 (EDC $= 0.91$)
\item \textbf{System-Sustainability}: Llama 3.1 8B wins with EDC $= 1.31$, followed closely by always\_refer\_today (naive) at $1.33$
\item \textbf{Balanced-Deployment}: Llama 3.3 70B wins with EDC $= 1.86$, followed by Claude Sonnet 4.6 at $1.91$
\end{itemize}

The three orderings share no overlap in their winners. This is the paper's central deployment finding: single-ranking benchmarks are structurally incapable of expressing the trade-offs relevant to clinical AI selection in resource-constrained health systems.

\begin{figure}[t]
\centering
\includegraphics[width=\textwidth]{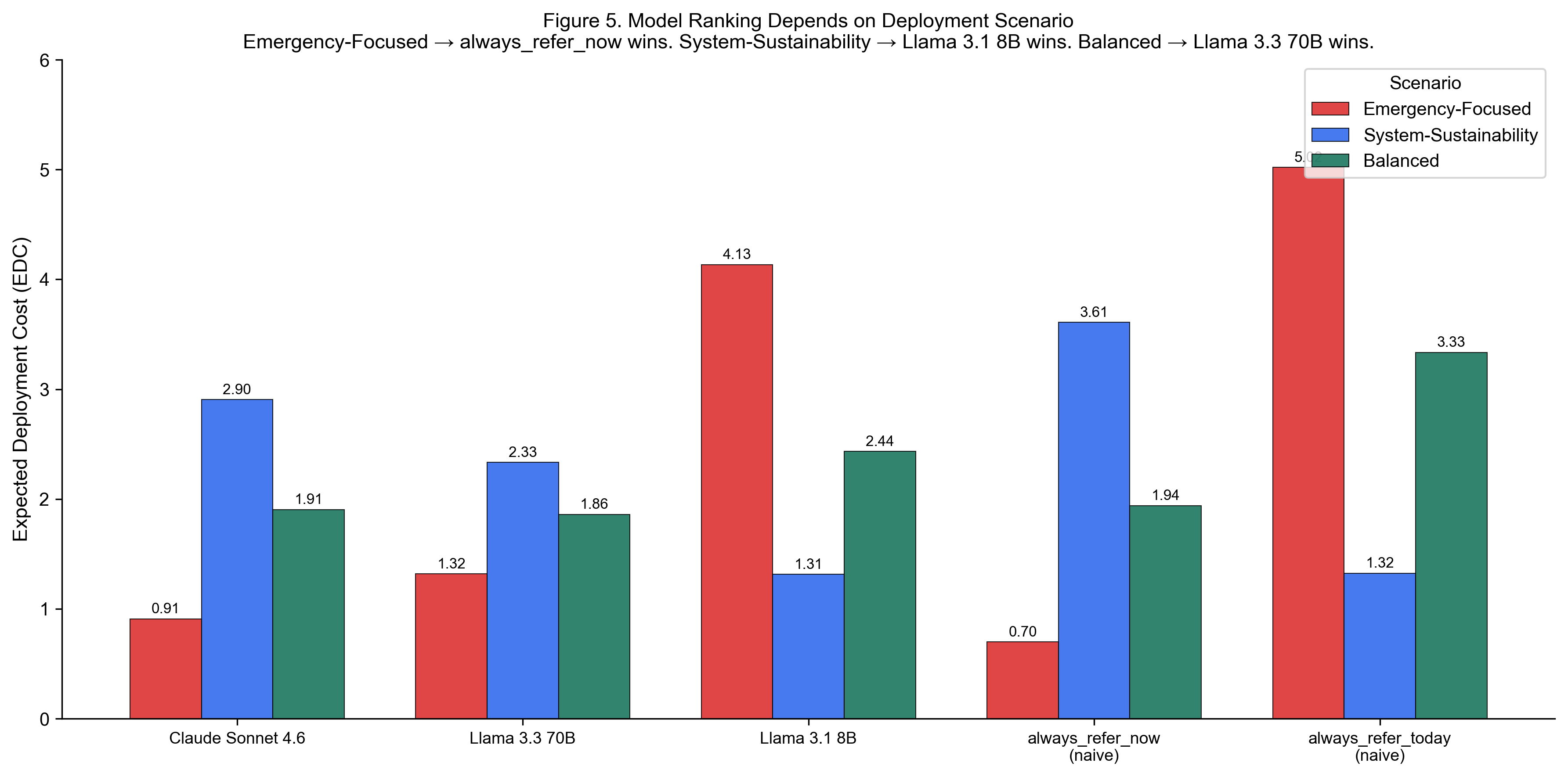}
\caption{Expected Deployment Cost by scenario. Different optimal models emerge for Emergency-Focused (naive always\_refer\_now), System-Sustainability (Llama 3.1 8B), and Balanced-Deployment (Llama 3.3 70B) contexts. No single model dominates across scenarios.}
\label{fig:scenarios}
\end{figure}

\subsection{Failure mode profiles}

Figure \ref{fig:radar} presents the failure mode radar for each model across five diagnostic axes. Claude Sonnet 4.6 (CEB only) shows compression on the Spec\_$y_0$ axis. Llama 3.3 70B (CEB, MTI, SDB) shows compression on the MCR-derived axis. Llama 3.1 8B (SDB only) shows extreme compression on Strict Sens. These profiles are structurally distinguishable at a glance.

\begin{figure}[t]
\centering
\includegraphics[width=\textwidth]{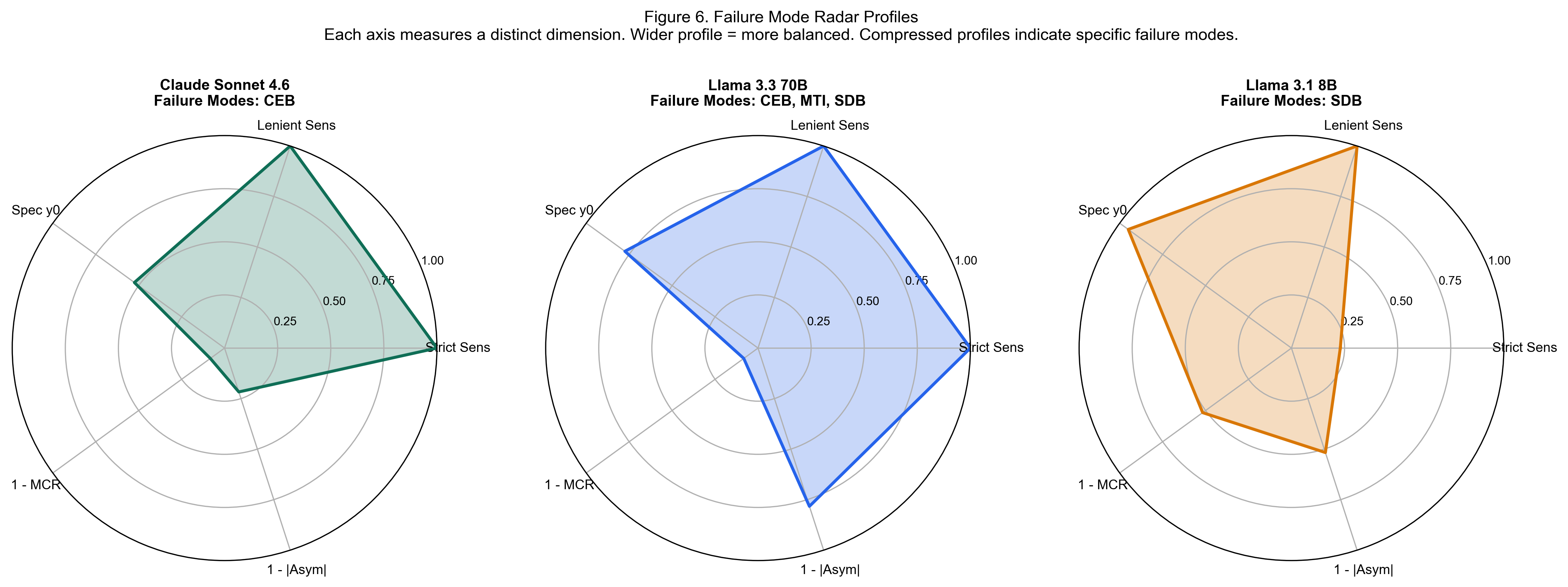}
\caption{Failure mode radar profiles for the three evaluated models. Each axis measures a distinct dimension of triage quality. Compressed profiles indicate specific failure modes.}
\label{fig:radar}
\end{figure}

\subsection{Summary of failure mode classifications}

Table \ref{tab:summary} summarises the full metric suite and formal failure mode classifications.

\begin{table}[h]
\centering
\caption{Full metric suite for the three evaluated models. Failure modes assigned per Definitions \ref{def:ceb}, \ref{def:sdb}, and 10.}
\label{tab:summary}
\small
\begin{tabular}{lrrrrrl}
\toprule
\textbf{Model} & \textbf{Acc} & \textbf{Strict Sens} & \textbf{Lenient Sens} & \textbf{Spec $y_0$} & \textbf{MCR} & \textbf{Failure Modes} \\
\midrule
Claude Sonnet 4.6 & 63.0\% & 100.0\% & 100.0\% & 52.5\% & 91.7\% & CEB \\
Llama 3.3 70B     & 68.0\% & 100.0\% & 100.0\% & 77.5\% & 91.7\% & CEB, MTI, SDB \\
Llama 3.1 8B      & 46.0\% & 23.0\%  & 100.0\% & 95.0\% & 48.3\% & SDB \\
\bottomrule
\end{tabular}
\end{table}

Every model exhibits at least one formal failure mode. No model exhibits none. Traditional accuracy would rank Llama 3.3 70B first; SAA and EDC framings reveal that this ranking is misleading given the presence of three simultaneous failure modes in that model.

\section{Discussion}

\subsection{Traditional metrics conceal safety-critical failures}

The most striking finding is the 77 percentage point gap between strict and lenient sensitivity for Llama 3.1 8B. Under conventional evaluation frameworks including our own v1.0 safety score, this model appears completely safe. Every emergency case is escalated to at least REFER\_TODAY. Yet under strict evaluation, 77 of 100 emergency cases are downgraded from REFER\_NOW to REFER\_TODAY. In deployment, these patients would be instructed to return within 24 hours rather than proceed immediately to hospital care. For time-critical conditions such as bacterial meningitis or severe sepsis, this delay is likely to be fatal.

This finding has direct implications for benchmark design across clinical AI. Any evaluation that reduces safety to a binary ``did not send home'' criterion will conceal Systematic Downgrade Bias. Ordinal decision spaces require ordinal safety metrics.

\subsection{No single model dominates}

Our scenario analysis reveals that the optimal model varies by deployment context. Emergency-Focused deployment favours conservative models including the naive always\_refer\_now baseline. System-Sustainability favours models that preserve TREAT\_HERE specificity. Balanced-Deployment favours models with intermediate profiles.

This structural fact undermines the presumption that a single ``best'' model can be identified through benchmarking. Clinical AI selection must incorporate health system-specific cost weightings. The Expected Deployment Cost framework provides a principled mechanism for this incorporation.

\subsection{Failure modes are architecture-agnostic}

Our findings do not indict any specific model family. Claude Sonnet 4.6 (proprietary) exhibits CEB. Llama 3.3 70B (open, large) exhibits three simultaneous failure modes. Llama 3.1 8B (open, small) exhibits SDB. These patterns correlate with neither the deploying organisation nor the model scale. Theorem \ref{thm:naive} shows that CEB in particular can be produced by an infinitely simple constant-prediction baseline. Failure modes appear to be properties of how models interact with clinical decision structures, not properties of the models themselves.

This suggests that mitigation should focus on prompt engineering and deployment architecture rather than model selection alone. A model exhibiting SDB may be safely deployed if the deployment layer flags all REFER\_TODAY decisions on high-acuity presentations for physician review. A model exhibiting CEB may be safely deployed if the referral pathway can absorb over-escalation.

\subsection{Limitations}

Our evaluation is limited to 200 synthetic vignettes derived from Nigerian PHC deployment data. The vignettes cover eight febrile illness categories and do not test non-febrile presentations, paediatric-specific conditions beyond those included, or chronic disease exacerbations. External validity of the failure mode taxonomy to other clinical domains remains to be demonstrated.

Cost weights in the Expected Deployment Cost framework are illustrative and would require calibration to specific health system contexts before operational use. Our three scenarios are stylised; real health systems have more complex cost structures.

We evaluated three models. Additional frontier models including Gemini 2.0 Pro, GPT-4o, and DeepSeek V3 were included in preliminary experiments but excluded from the final analysis due to API instability during evaluation. Extending the framework to a broader model set is straightforward and left to future work.

Finally, our safety metrics assume an ordinal decision space with three levels. Some clinical domains use richer ordinal or continuous urgency scores. The framework generalises but the specific thresholds in Definitions \ref{def:ceb} and \ref{def:sdb} would require re-calibration.

\section{Reproducibility Statement}

All code, data, evaluation pipelines, and analysis scripts are publicly available at \url{https://github.com/anthoniooladimeji11-coder/iyawobench}. The repository includes:

\begin{itemize}[leftmargin=*, itemsep=2pt]
\item The IyawoBench v1.0 dataset (JSON and CSV formats)
\item Response caches for all seven models plus five naive baselines
\item Ten analytical scripts covering model evaluation, metric computation, formal framework validation, and figure generation
\item Both colored and monochrome figure sets
\item Complete requirements specification for Python 3.12
\end{itemize}

Computational requirements for full reproduction: approximately 30 minutes total wall-clock time on standard hardware plus API access to Anthropic (Claude), Groq (Llama models), and optionally Google (Gemini). Estimated API cost is USD 2 to 5. The dataset and analysis scripts are released under Creative Commons BY 4.0.

\section{Ethical Considerations}

The IyawoBench v1.0 dataset was generated from statistical distributions of real patient encounters. No individual patient data appears in any vignette. All patient identifiers, geographic identifiers below the state level, and temporal identifiers below the month level were excluded from the source distributions before vignette generation. The synthetic vignettes therefore carry no patient re-identification risk.

Our findings identify systematic failure modes in currently deployed clinical AI systems. We believe responsible disclosure requires that these findings be published so that deploying organisations can implement appropriate mitigations. The framework proposed here provides a diagnostic mechanism; it does not encourage or enable adversarial use of clinical AI systems.

We caution against interpreting our findings as recommendations against clinical AI deployment in low and middle income countries. The comparator in such settings is often no clinical decision support at all, delivered by Community Health Extension Workers with limited training. Even models exhibiting formal failure modes may substantially outperform this baseline. Our framework is intended to support informed deployment decisions, not to discourage deployment.

\section*{Acknowledgements}

We thank the clinical teams at 19 primary health centres in Oyo State, Nigeria for the deployment data underlying IyawoBench. We thank the Iyawo Health engineering team for infrastructure support. This work was funded by internal Iyawo Health research budget.

\bibliographystyle{plainnat}
\bibliography{references}

\end{document}